\pgfplotsset{compat=1.9} 
\newtheorem{thm}{Theorem}
\newtheorem{lem}[thm]{Lemma}
\newtheorem{exmp}[thm]{Example}
\newtheorem{defn}[thm]{Definition}
\newcommand{\norm}[1]{\left\Vert#1\right\Vert}
\def\XXint#1#2#3{{\setbox0=\hbox{$#1{#2#3}{\int}$ }
\vcenter{\hbox{$#2#3$ }}\kern-.6125\wd0}}
\newcounter{lastnote}
\title{Kinematics and Dynamics of Quantum Walks in terms of Systems of Imprimitivity}
\author{Radhakrishnan Balu}
\affil{Army Research Laboratory Adelphi, MD, 21005-5069, USA \\
       radhakrishnan.balu.civ@mail.mil \\
       Computer Science and Electrical Engineering, \\
       University of Maryland Baltimore County, \\
       1000 Hilltop Circle, Baltimore, MD 21250
       radbalu1@umbc.edu}            
\date{Date: \today}
\begin{document}
\maketitle

\begin{abstract}
We build systems of imprimitivity (SI) in the context of quantum walks and provide geometric constructions for their configuration space. We consider three systems, an evolution of unitaries from the group $SO_3$ on a low dimensional de Sitter space where the walk happens on the dual of $SO_3$, standard quantum walk whose SI live on the orbits of stabilizer subgroups (little groups) of semidirect products describing the symmetries of 1+1 spacetime, and automorphisms (walks are specific automorphisms) on distant-transitive graphs as application of the constructions.
\end{abstract}

\section{Introduction}
\label{intro}
The concept of localization, where the position operator is properly defined in a manifold, and covariance in relativistic sense of systems can be completely characterized by systems of imprimitivity. These are representations of a group induced by representations of subgroups, more specifically stabilizer subgroups at a point in the orbit of the subgroup. Systems of imprimitivity are a more fundamental characterization of dynamical systems, when the configuration space of a quantum systems is described by a group, from which infinitesimal forms in terms of differential equations ($Shr\ddot{o}dinger$, Heisenberg, and Dirac etc), and the canonical commutation relations can be derived. For example, using SI arguments it can be shown \cite {Wigner1949} that massless elementary particles with spin less than or equal to 1 can't have well defined position operators, that is photons are not localizable. As discrete quantum walks can lead to Dirac evolution in the continuum, with proper choice of coin parameters and specific initial conditions \cite {Chandra2010} and \cite {Rad2018}, in this work we construct systems of imprimitivity for such quantum walks and describe the governing differential equations.

Let us first define the notion of SI and an important theorem by Mackey that characterizes such systems in terms of induced representations.

\begin {defn} \cite {Varadarajan1985} A G-space of a Borel group G is a Borel space X with a Borel automorphism $\forall{g\in{G}},t_g:x\rightarrow{g.x},x\in{X}$ such that
\begin {align}
&t_e \text{ is an identity} \\
&t_{g_1,g_2} =t_{g_1}t_{g_2}
\end {align}
The group G acts on X transitively if $\forall{x,y\in{X}},\exists{g}\in{G}\ni{x=g.y}.$
\end {defn}
\begin {defn} \cite {Varadarajan1985} A system of imprimitivity for a group G acting on a Hilbert space $\mathscr{H}$ is a pair (U, P) where $P: E\rightarrow{P_E}$ is a projection valued measure defined on the Borel space X with projections defined on the Hilbert space and U is a representation of G satisfying
\begin {equation}
U_gP_EU^{-1}_g = P_{g.E}
\end {equation}  
\end {defn}
When the action of G on X is transitive it is called transitive system of imprimitivity. In a more generalized setting when P is instead a projection operator valued measure (POVM),a resolution of identity, it is called a covariant system \cite {JPG2000}.
\begin {thm} (Mackey{'}s Imprimitivty Theorem \cite {JPG2000}) Let {U, P} be a transitive system of imprimitivity, based on the homogeneous (transitive) space X of the locally compact group G. Then, there exist a closed subgroup H of G, the Hilbert space $\mathscr{H}$, and a continuous unitary representation V of H on $\mathscr{H}$, such that the given system is unitarily equivalent to the canonical system of imprimitivity $(\tilde{U}, \tilde{P})$, arising from representation $\tilde{U}$ of G induced from V.
\end {thm}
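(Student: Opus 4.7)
The plan is to construct the closed subgroup $H$ as the stabilizer of a base point, extract the representation $V$ from the restriction of $U$ to this stabilizer after decomposing $\mathscr{H}$ through the projection-valued measure $P$, and exhibit a unitary intertwiner between the given system $(U,P)$ and the canonical induced pair $(\tilde{U},\tilde{P})$.

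First, I would fix a base point $x_0 \in X$ and set $H := \{h \in G : h \cdot x_0 = x_0\}$, so that transitivity produces a $G$-equivariant Borel isomorphism $G/H \cong X$ via $gH \mapsto g \cdot x_0$. A quasi-invariant measure $\mu$ on $X$ exists by the standard theory of homogeneous spaces of locally compact groups, and its null sets can be arranged to coincide with the spectral null sets of $P$. The spectral theorem for projection-valued measures then produces a direct-integral decomposition
\begin{equation}
\mathscr{H} \;\cong\; \int_X^\oplus \mathscr{H}_x\, d\mu(x), \qquad P_E \;=\; \int_E^\oplus \mathbf{1}_{\mathscr{H}_x}\, d\mu(x),
\end{equation}
which already turns $P$ into a canonical multiplication PVM on a space of vector-valued $L^2$-sections.

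Next, the covariance relation $U_g P_E U_g^{-1} = P_{g\cdot E}$ forces $U_g$ to map the fiber $\mathscr{H}_x$ onto $\mathscr{H}_{g\cdot x}$, rescaled by the Radon--Nikodym cocycle $\rho(g,x) := d\mu(g^{-1}\cdot x)/d\mu(x)$. Transitivity makes all fibers unitarily isomorphic, and I would identify each one with the distinguished fiber $\mathscr{H}_0 := \mathscr{H}_{x_0}$. Elements of $H$ preserve $x_0$ and therefore act on $\mathscr{H}_0$ as unitaries; this restricted action is the continuous unitary representation $V$ of $H$ that we seek. Choosing a Borel section $s : X \to G$ of $g \mapsto g \cdot x_0$ and setting $c(g,x) := s(g\cdot x)^{-1}\, g\, s(x) \in H$, the trivialization of the Hilbert bundle through $s$ furnishes an explicit candidate unitary $W : \mathscr{H} \to \tilde{\mathscr{H}}$ which, thanks to the cocycle identity satisfied by $c$, conjugates $U$ to the canonical induced representation $\tilde{U}$ built from $V$ and sends $P$ to the canonical position-multiplication PVM $\tilde{P}$.

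The main obstacle is measure-theoretic rather than algebraic. I have to secure the existence of a Borel (not merely set-theoretic) section $s$, which is provided by Mackey's cross-section theorem for standard Borel spaces; verify that the fibering of $\mathscr{H}$ forms a measurable field in the sense required for direct integrals so that $h \mapsto V(h)$ is strongly continuous and hence a genuine unitary representation; and carry the density $\rho(g,x)$ carefully through the intertwining identity so that $W$ comes out unitary rather than only isometric up to a positive weight. These analytic points, rather than any manipulation of group elements, are the technical heart of the argument and the place where the proof depends on the full strength of the locally compact, second countable assumption.
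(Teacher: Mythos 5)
The paper offers no proof of this theorem: it is quoted verbatim from the literature \cite{JPG2000}, so there is no in-text argument to measure yours against. Judged on its own terms, your outline follows the classical Mackey route --- stabilizer subgroup $H$ at a base point, identification $X\cong G/H$, quasi-invariant measure, direct-integral diagonalization of the projection valued measure $P$, decomposable form of $U_g$ carrying a Radon--Nikodym factor, a Borel cross-section $s:X\to G$, and an explicit intertwiner $W$ --- and that is indeed how the standard proof is organized, so the architecture is sound.

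There is, however, one step where your sketch passes over the genuinely hard point rather than merely deferring technicalities: the sentence ``elements of $H$ preserve $x_0$ and therefore act on $\mathscr{H}_0$ as unitaries.'' In the typical case $\mu(\{x_0\})=0$, so the fiber $\mathscr{H}_{x_0}$ is defined only up to null sets, and the field of unitaries $u_g(x):\mathscr{H}_{g^{-1}\cdot x}\to\mathscr{H}_x$ extracted from the covariance relation satisfies the cocycle identity only for almost every $x$, with the exceptional null set depending on the pair of group elements. Restricting such an object to the single (null) point $x_0$ is not yet meaningful. You must first invoke Mackey's cocycle-straightening lemma --- every almost-everywhere cocycle on a standard Borel $G$-space with quasi-invariant measure is equivalent to a strict cocycle --- and only then is $h\mapsto u_h(x_0)$ a genuine Borel, hence continuous, homomorphism of $H$ into the unitary group of the fiber; only then is $V$ defined and the trivialization by $s$ legitimate. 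With that lemma inserted your argument closes; without it the construction of $V$, which is the heart of the theorem, has not actually been carried out. A smaller point of the same flavor: the claim that all fibers are isomorphic should be derived from the a.e.\ $G$-invariance of the multiplicity function $x\mapsto\dim\mathscr{H}_x$ together with ergodicity of the transitive action, not from pointwise transport of individual fibers.
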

\begin {thm} \cite{JPG2000} if (U,P) is a transitive system of covariance, then U is a sub representation of an induced representation.
\end {thm}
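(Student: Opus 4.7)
The plan is to reduce this statement to Mackey's imprimitivity theorem (the preceding theorem) via a covariant version of Naimark's dilation, which lifts the POVM $P$ to a genuine projection valued measure on a larger Hilbert space while simultaneously lifting $U$ to a unitary representation participating in a bona fide system of imprimitivity.

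First I would apply Naimark's dilation theorem to the POVM $P$ to obtain a Hilbert space $\mathscr{K}$, an isometric embedding $J : \mathscr{H} \to \mathscr{K}$, and a projection valued measure $\tilde{P}$ on $X$ such that $P_E = J^{*} \tilde{P}_E J$ for every Borel set $E \subset X$, with $\mathscr{K}$ minimally generated by the vectors $\tilde{P}_E J \psi$ as $E$ and $\psi$ vary. The core step is then to extend $U$ to a unitary representation $\tilde{U}$ on $\mathscr{K}$ such that $(\tilde{U}, \tilde{P})$ is a transitive system of imprimitivity, i.e.
\[
\tilde{U}_g \tilde{P}_E \tilde{U}_g^{-1} = \tilde{P}_{g\cdot E}.
\]
This is done by defining $\tilde{U}_g$ on the dense span of the vectors $\tilde{P}_E J \psi$ by the prescription
\[
\tilde{U}_g\bigl(\tilde{P}_E J \psi\bigr) := \tilde{P}_{g\cdot E}\, J (U_g \psi),
\]
and checking that the covariance hypothesis $U_g P_E U_g^{-1} = P_{g\cdot E}$ forces this operator to be well defined, isometric, and multiplicative in $g$, so that it extends uniquely to a unitary representation on all of $\mathscr{K}$.

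Once $(\tilde{U}, \tilde{P})$ is in hand, Mackey's imprimitivity theorem yields a closed subgroup $H \subset G$ and a continuous unitary representation $V$ of $H$ such that $\tilde{U}$ is unitarily equivalent to the representation of $G$ induced from $V$. To conclude that $U$ itself is a subrepresentation of this induced representation, I would observe that $J(\mathscr{H}) \subset \mathscr{K}$ is $\tilde{U}$-invariant, because the special case $E = X$ of the defining formula (together with $\tilde{P}_X = I$) gives the intertwining identity $\tilde{U}_g J = J U_g$. Hence $U$ is unitarily equivalent to the restriction of $\tilde{U}$ to the $\tilde{U}$-invariant subspace $J(\mathscr{H})$, which is exactly what it means for $U$ to be a subrepresentation.

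The main obstacle I expect is the consistency of the prescription for $\tilde{U}_g$: a single vector in $\mathscr{K}$ admits many expressions as a finite linear combination $\sum_i \tilde{P}_{E_i} J \psi_i$, and one must verify that the definition is independent of the representative and that it preserves inner products. This is exactly where the covariance of $(U,P)$ enters in an essential way, since isometry of $\tilde{U}_g$ on such combinations reduces, after expansion and use of $P_E = J^{*}\tilde{P}_E J$, to comparing sums of the form $\langle \psi_i,\, P_{E_i \cap E_j}\, \psi_j \rangle$ with $\langle U_g\psi_i,\, P_{g\cdot(E_i \cap E_j)}\, U_g\psi_j \rangle$, which agree precisely because of the covariance relation. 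A secondary technical point is weak measurability of $g \mapsto \tilde{U}_g$, which I would deduce from weak continuity of $U$ together with the regularity of the POVM $P$ on the homogeneous space $X$.
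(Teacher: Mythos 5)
The paper does not prove this statement at all --- it is quoted as a known result from the cited reference (Ali--Antoine--Gazeau), so there is no in-paper argument to compare against. Your proof is the standard one from that literature (covariant Naimark dilation of the POVM to a PVM, extension of $U$ to the dilation space via the covariance relation, Mackey's theorem applied to the resulting transitive system of imprimitivity, and the intertwining identity $\tilde{U}_g J = J U_g$ from the case $E=X$ to exhibit $J(\mathscr{H})$ as an invariant subspace), and it is correct: the well-definedness and isometry check you flag as the main obstacle does go through exactly as you describe, since $\tilde{P}_{E_i}\tilde{P}_{E_j}=\tilde{P}_{E_i\cap E_j}$ and $J^{*}\tilde{P}_{E}J=P_{E}$ reduce everything to the covariance of $(U,P)$.
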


The next step is to consider semidirect product of groups, that naturally describe the dynamics system of quantum walks, and use the representation of the subgroup to induce a representation in such a way that it is an SI.

\begin {defn}
Let A and H be two groups and for each $h\in{H}$ let $t_{h}:a\rightarrow{h[a]}$ be an automorphism (defined below) of the group A. Further, we assume that $h\rightarrow{t_h}$ is a homomorphism of H into the group of automorphisms of A so that
\begin {align} \label{semidirectEq}
h[a] &= hah^{-1}, \forall{a\in{A}}. \\
h &= e_H, \text{  the identity element of H}. \\
t_{{h_1}{h_2}} &= t_{h_1}t_{h_2}.
\end {align}
Now, $G=H\rtimes{A}$ is a group with the multiplication rule of $(h_1,a_1)(h_2,a_2) = (h_1{h_2},a_1{t_{h_1}}[a_2])$. The identity element is $(e_H,e_A)$ and the inverse is given by $(h,a)^{-1} = (h^{-1},h^{-1}[a^{-1}]$. When H is the homogeneous Lorentz group and A is $R^4$ we get the Poincare group via this construction.
\end {defn}

\section {Coins Space Ensembles}
The group $SO_3$ is a compact Lie group for which we can apply the above construction to build systems of imprimitivity to describe a relativistic quantum walk on de Sitter space. In this walk evolution the coins are drawn from the compact group $SO_3$ and the statistics of the coins (affected by the walker DOF due to spin-orbit coupling) are considered. Alternately, an evolution $\{U^n_g, g\in{SO_3}\}$ can thought of as a walk on a non commutative space whose three dimensional axes would correspond to the generators of the Lie algebra $so_3$ \cite {Biane2008}.

In Ref \cite{Hannabuss1969} the author uses an elegant technique that takes the difference between Lie algebra operators of two representations of the group $SO_5$ that are manifestly covariant under the actions of the group.  This leads to a simple derivation of Dirac equation on curved spacetime, on two spaces with each symmetries preserving the metrics (-1,-1,-1,+1,-1) and (-1,-1,-1,+1,+1), described by the equation $x_1^2 + x_2^2 + x_3^2 +x_4^2+x_5^2= L$ in space $R^5$ that has three space like and two time-like coordinates. We adopt that method for a similar construction on $R^3$ with a space coordinate and two time-like coordinates.

The 3 generators, 2 boosts and a rotation, of the Lie algebra $so_3$ are $J_{ab},a,b\in\{1,2,3\}$ with the only possible Casimir operator $J^2$ (corresponding to spin angular momentum) for this algebra. The invariant operators (Casimir) are the center of the Lie algebra $\textit{so}_3$. Let us induce an unitary representation of it by using a finite dimensional representation $\Sigma$ of $SO_3$, could be any one of the countably infinitely many unitary irreducible representations afforded by Plancherel decomposition \cite {KP1990}, restricted to its subgroup $SO_2$. Now, we have a representation of $SO_3 = U^1\times{\Sigma}$ that is a product of the one induced by the trivial representation of $SO_2$ (which is just the regular representation) and $\Sigma$. 

The infinitesimal (differential), $M_{ab}$ in representation $U^1$ and $\sigma_{ab}$ that of $\Sigma$, form of the generator and Casimir operators for the Lie algebra of the group $U^1\times{\Sigma}$ can be written as the angular momentum operators that are more fundamental than linear momentum operators in de Sitter space \cite {Dirac1935}
\begin {align}
M_{ab} &= i\{x_a(\frac{\partial}{\partial{x_b}}) - x_b(\frac{\partial}{\partial{x_a}})\}. \\
 J_{ab} & = M_{ab}\times{\mathbb{I}} + \mathbb{I}\times\sigma_{ab}\\
 J^2_{ab} & = M_{ab}M_{ab} + 2M_{ab}\sigma_{ab} + \sigma_{ab}\sigma_{ab}.
\end {align}
The invariant operator $J^2$ can be used to derive Klein-Gordon equation, to establish Dirac equation using first-order operators we need the following procedure that uses induced representations for building SI to guarantee covariance.
The expression $M_{ab}$ commutes with $J_{ab}$ (they are on different Hilbert spaces) and so the reminder $2M_{ab}\sigma_{ab} + \sigma_{ab}\sigma_{ab}$ of $J_{ab}^2$ also commutes with every operator of the algebra giving rise to the following Dirac equation with first order operators as a result of Schu{r'}s lemma:
\begin {equation}
\left(M_{ab}\sigma_{ab} + \frac{1}{2}\sigma_{ab}\sigma_{ab}\right)\Psi = \lambda\Psi.
\end {equation}
In the neighborhood of $(0, 0, R^2)$ the above equation may be written as
\begin {equation} \label {dsDirac}
\{p_2 + \sigma_y{p_1} - m\sigma_y\sigma_z\} \Psi= 0.
\end {equation}
Later, we will provide a geometric interpretation for this construction in terms of fiber bundles.

\section {Quantum Walks}
Next, let us apply the constructions to quantum walk evolutions. Quantum walks are unitary evolutions that involve a coin Hilbert space and a walker Hilbert space where the dynamics happens \cite{Kempe2003}, \cite {Rad2017a}, \cite {Rad2017b}. Let $\mathscr{C}^2$ (complex space) and $\mathbb{Z}$ (set of integers) correspond to Hilbert spaces of the coin and walker respectively. The dynamics of the quantum walk is described by the unitary operator U composed of a rotation on the Bloch sphere and a translation on the integer line
\begin {align} 
L^{\pm}(x) &= x\pm{1}, x\in{\mathbb{Z}}. \\
\mathscr{C}^2 &= \Pi_0\oplus{\Pi_1},\Pi_i = \ket{i}\bra{i}. \\
S(x) &= \Pi_0\otimes{L^+}+\Pi_1\otimes{L^-}. \\
U(x) &= S(x)T.  \label {QWalk}
\end {align} 
That is, the walker takes a step to the right on $\mathbb{Z}$ if the outcome of projecting the Hadamard coin 
\begin {align*}
T &= \frac{1}{\sqrt{2}}\begin {bmatrix} 1 & 1 \\
                                 1 &  -1 \\
                  \end {bmatrix}, \\                 
\end {align*}
on $\mathscr{C}^2$ is 0 and moves a step left otherwise. A variation of the above unitary evolution split-step quantum walk \cite {Rad2018a} can be defined as follows:
\begin {align}
S(x)^+ &= \Pi_0\otimes{L^+}+\Pi_1\otimes\mathbb{I}. \\
S(x)^- &= \Pi_0\otimes{\mathbb{I}}+\Pi_1\otimes{\L^-}. \\
W(x) &= S^-(x)T(\theta_2)S^+(x)T(\theta_1).  \label {SS-QWalk}
\end {align}
The split-step quantum walk has an effective Hamiltonian $H_{eff}$ with several symmetries: 
\begin {align*}
\Gamma_\theta &= e^{i\pi{A\theta}.\frac{\sigma}{2}}. \\
\Gamma_\theta^{-1}H_{eff}\Gamma_\theta &= -H_{eff}. &\text {  Chiral symmetry.} \\
P_h &= K &\text {  Complex Conjugation}. \\
P_{h}H_{eff}P_h &= -H_{eff}. &\text {  Particle-Hole symmetry.} \\
T &= \Gamma_\theta{P_h}. \\
TH_{eff}T^{-1} &= H_{eff}. &\text {Time-reversal symmetry.}
\end {align*}
In this work we describe the kinematics of quantum walks in terms induced representations of groups that act on the configuration space of the walker and derive systems of imprimitivity.

\begin {thm} Split-step quantum walk described by the equation \eqref {SS-QWalk} is a transitive system of imprimitivity. 
\end {thm}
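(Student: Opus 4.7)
The plan is to construct the pair $(U,P)$ explicitly on $\mathscr{H}=\mathbb{C}^2\otimes\ell^2(\mathbb{Z})$, then check covariance and close the loop with Mackey. Take the Borel group to be $G=\mathbb{Z}$ acting on the configuration space $X=\mathbb{Z}$ by translation $t_n:x\mapsto x+n$; this action is obviously transitive, so any SI we build on it is automatically a transitive one in the sense of the definition quoted from \cite{Varadarajan1985}.

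First I would fix the projection-valued measure $P:E\mapsto P_E:=I_{\mathbb{C}^2}\otimes M_{\chi_E}$, where $M_{\chi_E}$ is multiplication by the indicator of $E\subseteq\mathbb{Z}$ acting on $\ell^2(\mathbb{Z})$. Since every singleton $\{x\}$ is Borel and $\sum_{x\in\mathbb{Z}}P_{\{x\}}=I$, countable additivity and orthogonality are immediate. Next I would set $U_n:=I_{\mathbb{C}^2}\otimes T^n$, with $T$ the unit shift from $L^{\pm}$ in \eqref{QWalk} (so $T=L^{+}$, $T^{-1}=L^{-}$). Then $n\mapsto U_n$ is a unitary representation of $\mathbb{Z}$ and the covariance identity
\begin{equation*}
U_n P_E U_n^{-1}=I\otimes T^n M_{\chi_E}T^{-n}=I\otimes M_{\chi_{E+n}}=P_{t_n\cdot E}
\end{equation*}
follows from the relation $T^n M_{\chi_E}T^{-n}=M_{\chi_{E+n}}$. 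With transitivity of the $\mathbb{Z}$-action, this already exhibits $(U,P)$ as a transitive SI.

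To tie this SI to the split-step dynamics \eqref{SS-QWalk}, I would observe that the two coin rotations $T(\theta_1),T(\theta_2)$ act purely on the $\mathbb{C}^2$ factor and therefore commute with every $P_E$, while the conditional shifts decompose as $S^{+}=\Pi_0\otimes T+\Pi_1\otimes I$ and $S^{-}=\Pi_0\otimes I+\Pi_1\otimes T^{-1}$, which are built from the $P_E$'s and the generators $U_{\pm 1}$ of the representation. Consequently $W=S^{-}T(\theta_2)S^{+}T(\theta_1)$ lies in the von Neumann algebra generated by $\{U_n\}\cup\{P_E\}$ together with the fiber action on $\mathbb{C}^2$, so the walk is genuinely a dynamics \emph{inside} the imprimitivity structure rather than an external object.

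Finally, I would invoke Mackey's imprimitivity theorem to recognize $(U,P)$ as a canonical induced system. The stabilizer of $0\in\mathbb{Z}$ under translation is the trivial subgroup $H=\{e\}$; take for $V$ the (necessarily trivial) two-dimensional representation of $H$ on the coin fiber $\mathbb{C}^2$. The induced representation $\tilde{U}=\operatorname{Ind}_{H}^{G}V$ lives on $\mathbb{C}^2\otimes\ell^2(G/H)=\mathbb{C}^2\otimes\ell^2(\mathbb{Z})$ and acts by translating the $\ell^2$ factor; the canonical PVM arising from the coset projections is exactly our $P$. So $(U,P)$ is unitarily equivalent to $(\tilde{U},\tilde{P})$, which is what the theorem demands. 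The main obstacle I expect is philosophical rather than computational: one must justify that the split-step walk $W$ is faithfully ``described'' by this SI even though $W$ itself is not a group element of $G$; in particular, showing that the symmetries $\Gamma_\theta,P_h,T$ of $H_{\mathrm{eff}}$ interact correctly with the induced system (so that the chiral/particle-hole/time-reversal structure is preserved by the Mackey decomposition) is what actually earns the word ``is'' in the statement.
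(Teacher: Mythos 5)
Your construction is mathematically sound, but it is a genuinely different argument from the paper's. You take the acting group to be the translation group $\mathbb{Z}$ on the position lattice $X=\mathbb{Z}$, define $P_E=I_{\mathbb{C}^2}\otimes M_{\chi_E}$ and $U_n=I_{\mathbb{C}^2}\otimes T^n$, and verify covariance directly from $T^nM_{\chi_E}T^{-n}=M_{\chi_{E+n}}$; this is the canonical position/shift imprimitivity system, induced from the trivial subgroup, and everything you write checks out. The paper instead works in momentum space: it takes the configuration space to be the spacetime lattice $\mathbb{Z}^2$, passes to its Pontryagin dual $\mathbb{T}^2$, and builds the SI for the semidirect product $O(1,1)\rtimes\mathbb{T}^2$ — the projection-valued measure is extracted from the spectral resolution $U_a=\int_{\mathbb{T}^2}x(a)\,dP(x)$ of the representation of the abelian part (a SNAG-theorem step), and the imprimitivity relation $V_hP_EV_h^{-1}=P_{h[E]}$ is then inherited from the automorphism action $a\mapsto h[a]$ of the discrete Lorentz boosts. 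The two readings of the (admittedly loosely stated) theorem are not equivalent in content: your SI encodes only translation covariance of the conditional shift, whereas the paper's SI is built so that the \emph{Lorentz} group acts on the base, which is what feeds the subsequent sections on little groups, the mass hyperboloid orbits in $\mathbb{T}^2$, and the Dirac equation. What your route buys is complete, elementary verifiability (the paper's proof never actually checks that the split-step operator $W$ implements the asserted covariance, a gap you candidly flag on your side as well); what the paper's route buys is the relativistic structure that the rest of the article depends on. If you wanted to match the paper's intent, you would replace your base space $\mathbb{Z}$ by $\mathbb{T}^2$ and your acting group $\mathbb{Z}$ by $O(1,1)$, and obtain $P$ from the spectral measure of the two commuting unit shifts in space and time rather than positing it directly.
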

\begin {proof} The Hamiltonian of the walk is translational invariant simultaneously with respect to the time and X axis. The configuration space of the 1+1 spacetime quantum walker can be seen to satisfy the criteria for periodic lattices in Minkowski space \cite {Boozer2010}. Boozer has listed viable spacetime lattices that are Lorentz invariant specified by parameters such as the ratio between the lattice constants of space and time dimensions and one feasible value is $\sqrt{3}$. The rest of the parameters are concerned with choosing their values for the frames of references for Lorentz variance. Let us observe that the 1+1 spacetime configuration of the walker is $\mathbb{Z}^2$  whose Pontryagin dual (Fourier space) is 2-torus that has the symmetry described by the semidirect product of two groups, the locally compact discrete Lorentz group $O(1,1)$ and the abelian 2-torus $\mathbb{T}^2$.
The same spacetime posses symmetry described by the  semidirect product of two groups, the locally compact discrete Lorentz group $SO_3$, whose universal cover is $SU_2$, and the abelian 2-torus $\mathbb{T}^2$ as the generic Lorentz transformation can be written as a sum of a transformations on coordinate and internal degrees of freedom \cite {Kim1991}. The internal DOFs may be the spinors in the case of massive particles and for massless photons they are rotation around momentum and gauge transformations. Discrete Lorentz group $O(1,1)$ \cite {Schwarz1976} with boost, with a generic element $\Lambda = \begin{bmatrix} cosh\phi & sinh\phi \\ sinh\phi & cosh\phi \end{bmatrix}$, is the only transformation apart from translations in this space is an automorphism group for the 2-torus and so the semidirect product $O(1,1)\rtimes\mathbb{T}^2$ is well defined.

Let $\mathscr{H}$ be a separable Hilbert space and P is a projection valued measure based on $\mathbb{T}^2$ acting in $\mathscr{H}$.
\begin {align*}
\langle{U}_a{f}, f_1\rangle) &= \int_{\mathbb{T}^2} x(a)d\nu_{ff_1}(x), x\in{\mathbb{T}^2}, \text{  x(a) is the value of character at a}. \\
\nu_{ff_1}(E) &= \langle{P}_E, f_1\rangle. \\
U_a &=  \int_{\mathbb{T}^2} x(a)dP(x).
\end {align*}
group representation S of equation \eqref {QWalk} acts on $\mathbb{Z}^2$ give rise to a system of imprimitivity.
Suppose V is a unitary representation of O(1,1). Then, by equation \ref {semidirectEq} we have,
\begin {align*} 
hah^{-1} &= h[a]. \\
V_h{U_a}V_h^{-1} &= U_{h[a]}; h\in{O(1,1)}, a\in\mathbb{T}^2. \text { replacing the above by its unitary representation}
\end {align*}
Using the spectral resolution of $U_a$ we can write the left hand side of the above equation as a representation of $a\in\mathbb{T}$ and so the representation of E is $V_h{P_E}V_h^{-1}$. Similarly, the mapping $a\rightarrow{h[a]}$ gives the representation $E \rightarrow{P_{h[E]}}$ and we have the following system of imprimitivity relationship.
\begin {align*} 
V_h{P_E}V_h^{-1} &= P_{h[E]}.
\end {align*}
$\blacksquare$
\end {proof}

\section {Little groups (stabilizer subgroups)}
Some of the different systems of imprimitivity that live on the orbits of the stabilizer subgroups are described below. It is good to keep in mind the picture that SI is an irreducible unitary representation of Poincare group $\mathscr{P}^+$ induced from the representation of a subgroup such as $SO_3$ as $(U_m(g)\psi)(k) = e^{ik.a}\psi(R^{-1}_pk)$ where g belongs to the Poincare group and R is a member of the rotation group and the expression is in momentum space.

The stabilizer subgroups of the Poincare group $\mathscr{P}^+$ can be described as follows: \cite {Kim1991}.
Time-like quantum walker: There is a reference frame in which the 2-component momentum is proportional to (0,m) and the stabilizer subgroup is $SO_3$ that describes the spin. The walker, a massive particle, is rest in this frame. Let us denote the eigen vector of the Casimir operator $P_\mu P^\mu$ as  $\ket{0\lambda}$. Then, we can describe the invariant space under the Poincare group $\mathscr{P}^+$ as $\Lambda_p\ket{0\lambda} = \ket{p\lambda}$ by applying the Lorentz boost. Any Lorentz operator operating on this space can be shown to be by a rotation (spin in our case). In other words, quantum walk is covariant if a proper basis is chosen to describe the spacetime grid such as the Boozer lattices \cite {Boozer2010}on which the walker evolves.

Space-like walker: The Lorentz frame in which the walker is at rest has momentum proportional to (Q,0) and the little group is
again $SO_3$ and this time the rotations will change the helicity. In this imaginary mass case the little group is rotations around the space axis and the analysis above carries through.

Light-like walker: There is no frame in which the walker is at rest but the frame where the momentum is proportional to $(\omega,\omega)$ has the stabilizer subgroup with elements of the form $J_1$, that is a rotation around the first component of momentum and the boost $\Lambda_p$ in the spatial direction \cite {Kim1991}. These two operators commute and the induced representation can be constructed as above. In the case of light-like particle the induced representation for the 3+1 spacetime Poincare group was constructed in phase space by Kim et al, \cite {Kim1991} and for the case of 1+1 dimension more involved construction involving fiber bundles are required \cite {Ali1993}. The group generates the coherent states (frames in mathematical context) is the same that makes the $(\omega,\omega)$ momentum vector invariant. The tools of frames on Hilbert spaces theory \cite {Schr1926} \cite {Glauber1963} can be applied to generate Lorentz invariant inertial frames of quantum walkers.

\
\section {Quantum Graphs} 
\begin{defn} A finite dimensional quantum probability (QP) space is a tuple $(\mathscr{H},\mathbb{A}, \mathbb{P})$ where  $\mathscr{H}$ is a finite dimensional Hilbert space, $\mathbb{A}$  is a *-algebra of operators, and $\mathbb{P}$ is a trace class operator, specifically a density matrix, denoting the quantum state.\end{defn}
 Given a graph G = (V, E) with finite number of vertices V and set of edges E let us consider the Hilbert space $\mathscr{H}=l^2(V)$, the square summable functions on V with the inner product $\langle{f}, g\rangle = \sum_{x\in{G}}{\overline{f(x)}g(x)}, g,f\in{l^2(G)}$.
An edge $(x,y)\in{E}, x,y\in{V}$ is denoted by $x\sim{y}$ and the distance function $\partial(x, y), x,y \in{V}$ is defined as the shortest path connecting the two vertices.\\
The collection of functions below form an orthonormal basis of $l^2(G)$.
\begin {align*}
\delta_x(y) &= 1 \text{   if x = y}.\\
                 &= 0  \text{   o.t.}
\end {align*}
Let $C_0(V)$ be the dense subspace of $l^2(G)$ spanned by $\{\delta_x\}$ and the adjacency algebra $L(C_0(V))$ be the *-algebra generated by linear operators on $C_0(V)$. The operator T is said to be locally finite if the following conditions are satisfied:
\begin {equation}
|\{x\in{V}: T_{xy} = 0\}| < \infty, \forall{y\in{V}}.\\
|\{y\in{V}: T_{xy} = 0\}| < \infty, \forall{x\in{V}}.
\end {equation}
When G is finite the matrix elements of $T\in{L(C_0(V))}$ are determined as $T_{xy} = \langle{\delta_x}, T\delta_y\rangle$ and its adjoint as $T^{\dag}_{xy} = \langle{T\delta_x}, \delta_y\rangle$.
\begin {defn}
Given a graph $\mathscr{G} = (G, E)$ and an adajacency algebra $\mathscr{A}(\mathscr{G})$ a vacuum state $\delta_o$ at a fixed origin of the graph $o \in {V}$ is defined as $\langle{\delta_o}, a\delta_o\rangle, a\in {\mathscr{A}}.$
\end {defn}
It is easy to verify the fact $(A^m)_{xy} = \langle\delta_x, A^m\delta_y\rangle$ which is the number of m-steps walks connecting x and y vertices.
\begin {defn}
The adjacency matrix of a graph (G, E) is defined as 
\begin {align*}
A_{xy} &= 1  \text{    if (x, y) }\in {E}. \\
            & = 0 \text{   o.t.}
\end {align*}
\end {defn}
It is easy to see that A is symmetric, taking values in \{0, 1\}, with vanishing diagonals. 
\begin {defn}
A stratification of a graph $\mathscr{G} = (V, E)$ with a fixed origin $o \in {V}$ is defined a disjoint union of strata
\begin {equation}
V = \bigcup_{n = 0}^\infty {V_n}, V_n = \{x: \partial(o, x) = n\} \label{eq:strat}
\end {equation} 
\end {defn}
It is easy to verify that 
\begin {equation}
\forall{x,y\in{V}}, x\sim{y} \text{  and  } x\in{V_n} \Rightarrow {y}\in{V_{n-1}}\cup{V_n}\cup{V_{n+1}}
\end {equation}

We define the following matrices w.r.t the stratification equation (\ref{eq:strat}) and the quantum decomposition of A
\begin {align}
\forall{y\in{V_n}}, (A_{xy})^{\epsilon} &= A_{xy} \text{   } x\in{V_{n+\epsilon}} \text{   , } \epsilon\in\{o,+,-\}.\\
                             &= 0 \text{  o.t.} \\
                          A &= A^{o} + A^{+} + A^{-}. \\
                          (A^+)^* &= A^-. \\
                          (A^-)^* &= A^+. \\
                          \langle{A^{+}f},g\rangle &= \langle{f},A^{-}g\rangle,\forall{f,g\in{C_0{V}}}.\\
                          (A^o)^* &= A^o
\end {align}
Given a stratification of a graph with origin $o\in{V}$ the degree k(x) of a vertex $x\in{V}$ can be decomposed as follow
\begin {align}
\omega_{\epsilon}(x) &= |\{y\in{V}; x\sim{y},\partial(o, y) = \partial(o, x) + \epsilon\}|. \\
k(x)                            &= \omega_{o}(x) + \omega_{+}(x) + \omega_{-}(x).
\end {align}
The *-algebra generated by $A^{\epsilon}$ is a non commutative one. The vectors $\Phi_n=\sum_{x\in{V_n}}\delta_x$ span a space denoted by $\Gamma(\mathscr{G})$.
\begin {defn}
A graph $\mathscr{G} = (V, E)$ is called distance-regular if for any choice of $x,y \in {V}$ with $\partial(x, y) = k$ the number of vertices $z \in {V}$ such that $\partial(x, z) = i$ and $\partial(y, z) = j$ is independent of the choice of x and y. Then, the intersection numbers i,j,and k are defined as 
\begin {equation}
p^k_{ij} = |\{z\in{V}: \partial(x, z)=i, \partial(y, z) = j, \partial(x, y) = k\}|.
\end {equation}
\end {defn}
\begin {defn}
A k-th distance adjacency matrix of a graph $\mathscr{G} = (V, E)$ is defined as
\begin {align}
(A^k)_{xy}     &=   1 \text{  if  } \partial(x, y) = k. \\
                      & = 0 \text{   o.t.} \\
 \sum_k{A^K} & =   J \text{,  where  } (J)_{xy} = 1.
\end {align}
\end {defn}
\begin {lem}
Let $\mathscr{G} = (V, E)$ be a distance-regular graph with intersection numbers $p^k_{ij}$. Then the following holds:
\begin {equation}
A_{i}A_{j} = \sum_{|i - j|}^{i + j} {p^k_{ij}A_k}.
\end {equation}
\end {lem}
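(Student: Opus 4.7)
The plan is to compute the $(x,y)$ entry of $A_i A_j$ directly and match it against the right-hand side entry by entry. By definition of matrix multiplication,
\begin{equation*}
(A_i A_j)_{xy} = \sum_{z \in V} (A_i)_{xz} (A_j)_{zy} = |\{z \in V : \partial(x,z) = i \text{ and } \partial(z,y) = j\}|,
\end{equation*}
since $(A_i)_{xz}$ is the indicator that $\partial(x,z) = i$. So $(A_i A_j)_{xy}$ is exactly the count of vertices $z$ at distance $i$ from $x$ and distance $j$ from $y$.

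Next I would invoke distance-regularity. Setting $k = \partial(x,y)$, the definition of $p^k_{ij}$ says precisely that this count depends only on $k$, not on the particular pair $(x,y)$, and equals $p^k_{ij}$. Hence $(A_i A_j)_{xy} = p^k_{\partial(x,y)}{}_{ij}$, which we can rewrite globally as
\begin{equation*}
A_i A_j = \sum_{k \ge 0} p^k_{ij} A_k,
\end{equation*}
because for each fixed $k$, $(A_k)_{xy} = 1$ exactly when $\partial(x,y)=k$ and zero otherwise, so summing $p^k_{ij} (A_k)_{xy}$ over $k$ picks out the single term with $k = \partial(x,y)$.

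Finally I would cut the range of the sum down to $|i-j| \le k \le i+j$ by showing $p^k_{ij}=0$ outside this window. The upper bound is the triangle inequality: if $z$ realises $\partial(x,z)=i$, $\partial(z,y)=j$, then $k = \partial(x,y) \le i+j$, so no such $z$ exists when $k > i+j$. For the lower bound, the reverse triangle inequality gives $|i - j| = |\partial(x,z)-\partial(z,y)| \le \partial(x,y) = k$, ruling out $k < |i-j|$. Substituting these bounds yields the stated formula.

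There is no serious obstacle; the only step that requires a moment of care is the direction of the reverse triangle inequality to justify truncating the sum at $|i-j|$, and noting that $p^k_{ij}$ is well-defined as $0$ when no vertex $y$ with $\partial(x,y)=k$ has a witness $z$. Everything else is unpacking the definitions of $A_i$, $A_j$, $A_k$, and of $p^k_{ij}$.
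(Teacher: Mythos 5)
Your proof is correct. The paper itself states this lemma without proof (it is quoted as a standard fact about distance-regular graphs, essentially from the Hora--Obata reference), so there is nothing to compare against; your argument --- computing $(A_iA_j)_{xy}$ as the number of vertices $z$ with $\partial(x,z)=i$ and $\partial(z,y)=j$, identifying that count with $p^{\partial(x,y)}_{ij}$ by distance-regularity, and truncating the sum via the triangle and reverse triangle inequalities --- is exactly the standard proof and is complete.
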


Association schemes are naturally related to graphs and the resulting quantum probability space. Specifically, the Bose-Mesner algebra can be identified with distance regular graphs \cite {Obata2007} that leads to our next result.

\begin {thm} Let $(\chi = V, A_i, 0\leq{i}\leq{d}, R_i)$ be an association scheme of Bose-Mesner type and the corresponding adjacency matrix is defined on a distance-regular graph $\mathscr{G} = (V, E)$ whose vertices V with cardinality $\|V\| = d$ such that 
\begin {align*}
(A_i)_{xy} &= 1. \text{  if  } (x,y)\in {R_i}. \\
                 &= 0 \text{  } o.t.
\end {align*}
 Let $o \in {V}$ be the origin of the graph and $\{\Phi_i, i = 1, \dots, n\}$ as defined above. Then, the unitary regular representation $U_g(f(x)) = f(g^{-1}x)$ of the group G on the Hilbert space $\mathscr{H} = L^2(V)$ and the projection valued measure $P = \{\ket{\Phi_i}\bra{\Phi_i}, i=1, \dots, n\}$ form a transitive system of imprimitivity. 
 \end {thm}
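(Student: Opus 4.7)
The plan is to verify, in sequence, (i) that the given family of projections is a valid resolution of identity (after suitable normalization/completion), (ii) that the regular representation intertwines these projections according to a $G$-action on the index set, and (iii) that this action is transitive, finally invoking Mackey's Imprimitivity Theorem to display the induced-representation structure.

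First, since the strata $V_0,\dots,V_d$ partition $V$, the stratum vectors $\Phi_i=\sum_{x\in V_i}\delta_x$ have mutually disjoint supports in the orthonormal basis $\{\delta_x\}_{x\in V}$. Normalizing $\hat\Phi_i=\Phi_i/\sqrt{|V_i|}$ gives rank-one orthogonal projections $\hat P_i=\ket{\hat\Phi_i}\bra{\hat\Phi_i}$ with $\hat P_i\hat P_j=\delta_{ij}\hat P_i$; adjoining the projection onto $\Gamma(\mathscr{G})^{\perp}$ supplies a genuine PVM on all of $L^2(V)$.

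Next, because $U_g\delta_x=\delta_{g.x}$ and each $g\in G$ is a graph automorphism, distance-preservation yields
\begin{equation*}
U_g\Phi_i=\sum_{x\in V_i}\delta_{g.x}=\sum_{y\in g.V_i}\delta_y,
\end{equation*}
and $g.V_i$ is exactly the set of vertices at distance $i$ from the translated origin $g.o$. Thus $U_g\hat P_iU_g^{-1}$ is the projection associated with the image stratum $g.V_i$, which is the required covariance $U_gP_EU_g^{-1}=P_{g.E}$; in the special case $g\in H=G_o$ this reduces to $U_g\hat P_iU_g^{-1}=\hat P_i$ since the stabilizer preserves each stratum setwise. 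For transitivity, the Bose-Mesner hypothesis on $(\chi,A_i,R_i)$ is equivalent to distance-transitivity of $\mathscr{G}$, which makes $G$ act transitively on $V$ and makes $H=G_o$ act transitively on every $V_i$. Identifying the Borel space $X$ carried by $P$ with the $G$-orbit of the marked-stratum datum and applying Mackey's theorem (Theorem 1) realizes $U$ as induced from a representation of $H$; the sums $\Phi_i$ then coincide with the canonical basis of the induced module, so $(U,P)$ is unitarily equivalent to the canonical transitive SI built by induction from $H$.

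The principal subtlety is pinning down the correct Borel space on which the PVM lives so that the $G$-action is genuinely transitive: whether one takes $X$ to be a single orbit $V_i$ (on which $G_o$ acts transitively and the $\hat P_i$ lift from the pointwise PVM $P(\{x\})=\ket{\delta_x}\bra{\delta_x}$), or the union of such orbits with $G$ acting by moving the origin. Once this identification is fixed, the covariance equation above together with Mackey's theorem makes the verification essentially formal; the real content sits in the distance-transitivity input that lets the automorphism group permute strata consistently with translations of the base vertex.
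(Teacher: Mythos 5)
Your proof is correct and follows the same overall strategy as the paper: a direct verification of the covariance relation $U_g P_{\Phi_i} U_g^{-1} = P_{g\cdot\Phi_i}$, followed by an appeal to transitivity and Mackey's theorem. The differences are worth recording. The paper obtains transitivity by observing that the regular representation of the finite group $G$ is induced from the trivial representation of the subgroup $\{e\}$, and its covariance computation is purely formal: it rewrites $\sum_{x\in V_n}\delta_x$ as $\sum_{xg^{-1}\in V_n}\delta_x$ and then as $\sum_{gxg^{-1}\in V_n}\delta_x$, implicitly identifying the vertex set with the group itself in Cayley-graph fashion. You instead derive covariance from the fact that automorphisms preserve graph distance, so that $g\cdot V_i$ is the $i$-th stratum about the translated origin $g\cdot o$, and you obtain transitivity from distance-transitivity of the graph. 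Your version is more careful on two points the paper skips: you normalize the $\Phi_i$ and adjoin the projection onto $\Gamma(\mathscr{G})^{\perp}$ so that $P$ is a genuine resolution of the identity on all of $L^2(V)$, and you explicitly flag that the Borel space carrying the PVM must be identified with the orbit of the marked-stratum datum for the $G$-action on it to be well defined --- a point the paper's notation $P_{g\cdot\Phi_n}$ quietly elides. One caveat: your assertion that the Bose--Mesner hypothesis is \emph{equivalent} to distance-transitivity is too strong; distance-regularity yields the Bose--Mesner algebra but does not imply distance-transitivity (only the converse holds), so the transitivity of the automorphism group compatible with the stratification must be taken as part of the hypothesis rather than deduced from the association scheme. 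Finally, the paper's interacting-Fock-space digression (the Jacobi sequences $\omega_n$, $\alpha_n$) plays no logical role in its covariance computation, and you were right to omit it.
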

\begin {proof}
Let us first observe that G being a finite group it is compact and its left and right regular representations coincide. Besides, the regular representation is induced by the trivial representation of its subgroup {e} and so it is transitive. G being a distance-regular graph ($\Gamma(G), \{\Phi\}_n, A^+,A^-,A^0$) is an interacting Fock space with the Jacobi sequence 
\begin {align*}
\omega_n &= \frac{|V_n|}{|V_{n-1}|}\omega_{-}(y)^2,\text{   }y\in{V_n}. \\
\omega_{\epsilon}(x) &= |\{y\in{V};y~x,\delta(o,y)=\delta(x)+\epsilon\}|.\\
\alpha_n &= \omega_o(y), \text{    } y\in{V_{n-1}}, n=1,2,...
\end {align*}

\begin {align}
U_gP_{\Phi_n}U^{\dag}_g (f(y)) &= U_gP_{\Phi_n}(f(gy)), f \in l^2(V), g,y \in {G}.\\
& = U_g \left(\sum_{x\in{V_n}}\delta_x\right)\left(\sum_{x\in{V_n}}\delta_x\right)^{\dag} (f(gy)). \\
& = U_g \left(\sum_{xg^{-1}\in{V_n}}\delta_x\right)\left(\sum_{xg^{-1}\in{V_n}}\delta_x\right)^{\dag} (f(y)). \\
& = \left(\sum_{gxg^{-1}\in{V_n}}\delta_x\right)\left(\sum_{gxg^{-1}\in{V_n}}\delta_x\right)^{\dag} (f(y)). \\
&= P_{g.{\Phi_n}}.
\end {align}
$\blacksquare$
\end {proof}

\section {Geometric interpretation and the Dirac equation}

The states of a freely evolving relativistic quantum particles are described by unitary irreducible representations of Poincare group that has a geometric interpretation in terms of fiber bundles. To express quantum walks in a similar fashion let us recall the related definitions.

\begin {defn} A fiber bundle is a triple $(E,\pi,M)$ where E is a space, $\pi:E\rightarrow{M}$ is a projection from E to a manifold M. A principle fiber bundle is isomorphic to $(E,\rho,E/G)$ where E is a right G-space of a Lie group, G acts on E freely, and the quotient group $E/G$ is the orbit space of the G-action on E. A bundle can be thought of as a product space with a twist which can encode  information such as topological invariants or Casimir operators. 
\end {defn}

\begin {exmp} Let G be the compact Lie group SO(3) and H is the closed subgroup SO(2), then the group acts freely on the H-orbits space, which is four-sphere, and $(G,\pi,G/H)$ is a principle H-bundle. The bundle can be viewed as a twisted product of H and G/H.
\end {exmp}

\begin {defn} A cross-section of a bundle $(E,\pi,M)$ is a map $s: M\rightarrow{E}$ such that the image of each point $x\in{M}$ lies in the fiber $\pi^{-1}(x)$ over x as $\pi\circ{s} = id_M$. When these functions are square integrable with respect to an appropriate measure they form a Hilbert space to describe the states of quantum systems. In general, fiber bundles may not have cross-sections but for vector bundles they can be constructed. An important theorem states that principle bundles have smooth cross sections only when they are trivial, that is the twist is a regular product \cite {Isham1989}. 
\end {defn}
\begin {defn} Let X and Y be G-spaces, then a G-product $\times_G$ is the equivalence class $(x,y)\equiv({x'},{y'})$ if $\exists{g\in{G}},\ni$ {x'}=gx and {y'} = gy and is denoted by $X\times_G{Y},[x,y]$. They are the G-orbits contained in the product space.
\end {defn}

\begin {defn} Let $\eta = (P,\pi,M)$ be a principle G-bundle and F be a left G-space. Define the F-orbits as $P_F = P\times_F{G}$ where the left action is $g(p,v) = (pg, g^{-1}v)$ and a projection $\pi_{F}:P_F\rightarrow{M}$ by the map $\pi_F([p,v]):=\pi(p)$. Then $\eta[F] = (P_F,\pi_F,M)$ is the fiber bundle associated with the principle bundle $\eta$ via the action of the group G on F. It is easy to prove that $\forall{x\in{M}}$ the space $\pi_F^{-1}(x)$, the fibers, is homeomorphic to F.  The specific left action on the F-orbits are required to make sure that the associated bundle projection $\pi_F$ is well defined. This can be seen with if $[p_1, v_1] = [p_2, v_2]$ then $\exists{g}\in{G}$ such that $(p_2, {v_2}) = (p_1{g}, g^{-1}v_1)$. This implies that 
\begin {equation}
\pi_F(p_2, v_2) = \pi(p_2) = \pi(p_1{g}) = \pi(p_1) = \pi_F(p_1, v_1).
\end {equation}
\end {defn}
 This way a family of new fiber bundles can be constructed out of a principle G-bundle using G-spaces that act as fibers.

\begin {defn} A vector bundle is a special case of an associated bundle in which the fiber is a vector space. Furthermore, $\forall{x\in{M}}$there must exist some neighborhood $U\subset{M}$ of x and a local trivialization $h: U\times\mathscr{R}^n\rightarrow\pi^{-1}(U)$ such that, $\forall{Y}\in{U}, h:\{y\}\times\mathscr{R}^n\rightarrow\pi^{-1}(y)$ is a linear map.
\end {defn}

The discrete time quantum walk described above leads to entanglement between the internal spin and walker DOFs and when the initial state of the walker is at a highly localized (Compton wavelength) single site of the 1-D lattice with positive-energy it leads to relativistic propagation \cite {Frederick2006}. The 1+1 spacetime discrete Lorentz group $\hat{O}(1,1)$-orbits  of the momentum space $\mathbb{T}^2=S^1\times{S^1}$, where the systems of imprimitivity established above will live, described by the symmetry $\hat{O}(1,1)\rtimes\mathbb{T}^2$. That is, the rest frames of the walker are related by Lorentz boosts where the position oerator is defined up to Compton wavelength. The orbits have an invariant measure $\alpha^+_m$ whose existence is guaranteed as the groups and the stabilizer groups concerned are unimodular and in fact it is the Lorentz invariant measure $\frac{dp}{p_0}$ for the case of forward mass hyperboloid.  The orbits are defined as:
\begin {align}
\hat{X}^+_m &= \{p: p^2_0 - p^2_1 = m^2, p_0 > 0\}, \text{\color{blue} forward mass hyperboloid: positive-energy}.\\
\hat{X}^+_m &= \{p: p^2_0 - p^2_1 = m^2, p_0 < 0\}, \text{\color{blue} backward mass hyperboloid}. \\
\hat{X}_{00} &= \{0\}, \text{\color{blue} origin}.
\end {align}
Each of these orbits are discrete and invariant with respect to $\hat{O}(1,1)$ and let us consider the stabilizer subgroup of the first orbit at p=(m,0). Now, assuming that the spin of the particle is 1/2 let us define the corresponding fiber bundles (vector) for the positive mass hyperboloid that corresponds to the positive-energy states by building the total space as a product of the orbits and the group $SL(2, C)$ (more precisely the members of the Clifford algebra) and using the discrete form of Dirac equation in momentum space.
\begin {align}
\hat{B}^{+,\frac{1}{2}}_m &= \{(p,v) \text{   }p\in{\hat{X}^+_m, }\text{   }v\in\mathscr{C}^2,p_0\sigma_z + p_1\sigma_x = mv\}. \\
\hat{\pi} &: (p,v) \rightarrow {p}. \text{  Projection from the total space }\hat{B}^{+,\frac{1}{2}}_m \text{ to the base }\hat{X}^+_m.
\end {align}
In other words we have used a discretized Dirac operator to locally trivialize the fiber bundle and still it provides a representation of the group.As a consequence of the trivialization we have spin-orbit coupling in the system. The states of the particles are defined on the Hilbert space $\hat{\mathscr{H}}^{+,\frac{1}{2}}_m$, square integrable functions on Borel sections of the bundle $\hat{B}^{+,\frac{1}{2}}_m$ with respect to the invariant measure $\alpha^+_m$, whose norm induced by the inner product is given below:
\begin {equation}
 \norm{\phi}^2 = \int_{X^+_m}p_0^{-1}\langle\phi{p},\phi{p}\rangle.{d\alpha}^+(p).
\end {equation}

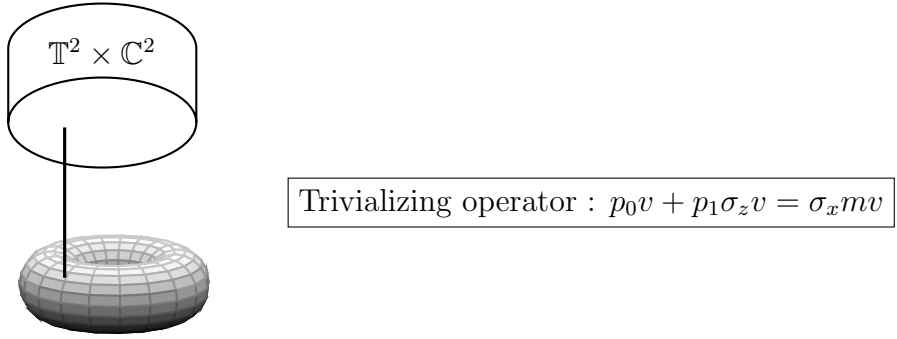
\begin{figure}
 \begin {center}
 \begin{tikzpicture}[scale=2]
     \node[cylinder,draw=black,thick,aspect=0.7,
        minimum height=1.7cm,minimum width=2.5cm,
        shape border rotate=-80,
        cylinder uses custom fill,
        ]  at (0.75,2.0)
   (A) {$\mathbb{T}^2\times\mathbb{C}^2$}; 
   \begin{axis}[hide axis, xtick=\empty, ytick=\empty,      
       width=3.25cm,height=2.5cm]
       \addplot3[surf,
       colormap/blackwhite, 
       samples=20,
       domain=0:2*pi,y domain=0:2*pi,
       z buffer=sort]
       ({(2+cos(deg(x)))*cos(deg(y+pi/2))}, 
        {(2+cos(deg(x)))*sin(deg(y+pi/2))}, 
        {sin(deg(x))});    
    \end{axis}  
     \node [draw]  at (4,1) {Trivializing operator : $p_0 v + p_1\sigma_z v = \sigma_xmv$};  
     \draw [draw=black,very thick] (0.5,1.5) -- (0.5,0.5);
\end{tikzpicture}
\end {center}
\caption {Fiber bundle formed with an hyperboloid, dense set of countably infinite points, of the 2-torus $\mathbb{T}^2$ as the base and $\mathbb{SL}(2,C)$ as fibers. }
\end {figure}

In the discrete configuration space a Dirac like difference equation of motion can be defined only in the momentum representation. Now, let us consider the continuous version of the above evolution with the 1+1 spacetime discrete Lorentz group $O(1,1)$-orbits  of the momentum space $\mathbb{R}^2$, where the systems of imprimitivity established above will live in the continuum of 1+1 spacetime, described by the symmetry $O(1,1)\rtimes\mathbb{R}^2$,with an invariant measure $\alpha^+_m$ can be defined as:
\begin {align}
X^+_m &= \{p: p^2_0 - p^2_1 = m^2, p_0 > 0\}, \text{\color{blue} forward mass hyperboloid}.\\
X^+_m &= \{p: p^2_0 - p^2_1 = m^2, p_0 < 0\}, \text{\color{blue} backward mass hyperboloid}. \\
X_{00} &= \{0\}, \text{\color{blue} origin}.
\end {align}

\begin {align}
B^{+,\frac{1}{2}}_m &= \{(p,v) \text{   }p\in{X^+_m, }\text{   }v\in\mathscr{C}^2,p_0\sigma_z + p_1\sigma_x = mv\}. \\
\pi &: (p,v) \rightarrow {p}. \text{  Projection from the total space }B^{+,\frac{1}{2}}_m \text{ to the base }X^+_m.
\end {align}

We have used the Dirac operator to locally trivialize the fiber bundle in this case. In the continuum an inverse transformation will provide the 1-D Dirac equation in position space as well. The above construction is similar to that of Poincare group \cite{Varadarajan1985} which was based on the original work of Wigner \cite {Wigner1939} that now leads to one dimensional Dirac equation after assuming the velocity of light, mass, and the Planck constant as unity. 
\begin{figure}
 \begin {center}
 \begin{tikzpicture}[scale=2]
  \node[cylinder,draw=black,thick,aspect=0.7,
        minimum height=1.7cm,minimum width=2.5cm,
        shape border rotate=-80,
        cylinder uses custom fill,
        ]  at (0,0)
   (A) {$\mathbb{R}^2\times\mathbb{C}^2$};  
    \node[rectangle,draw=black,thick,aspect=0.7,
        minimum height=1.4cm,minimum width=2.5cm,
        shape border rotate=60,
        cylinder uses custom fill,
        ]  at (0,-1.4)
   (B) {$\mathbb{R}^2$}; 
    \node [draw]  at (3,-1) {Trivializing operator : $p_0 v + p_1\sigma_z v = \sigma_xmv$};
   \draw [draw=black,very thick] (-0.2,-0.42) -- (-0.2,-1.4);
\end{tikzpicture}

\end {center}
\caption {Fiber bundle formed with an hyperboloid of $\mathbb{R}^2$ as the base and $\mathbb{SL}(2,C)$ as fibers. When the spinors that trivialize the bundle are replaced by wave functions and the momentum variables by their differential operators we get the Dirac equation in the familiar form $i\partial_t\Psi + i\sigma_x\partial_x\Psi = \sigma_z m\Psi$.}
\end {figure}
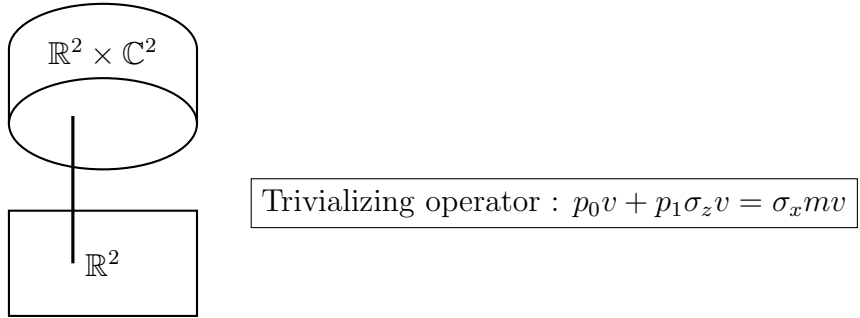
The limit of the 1+1 lattice is the euclidean space $R^2$ and so the torus based fiber bundle converges to the $R^2$ based bundle.
Another way to look at the limit is the dense set of points of the hyperbola of the discrete walk become continuous curve in the base space of the fiber bundle that is lifted to the bundle making the difference equation a differential one.
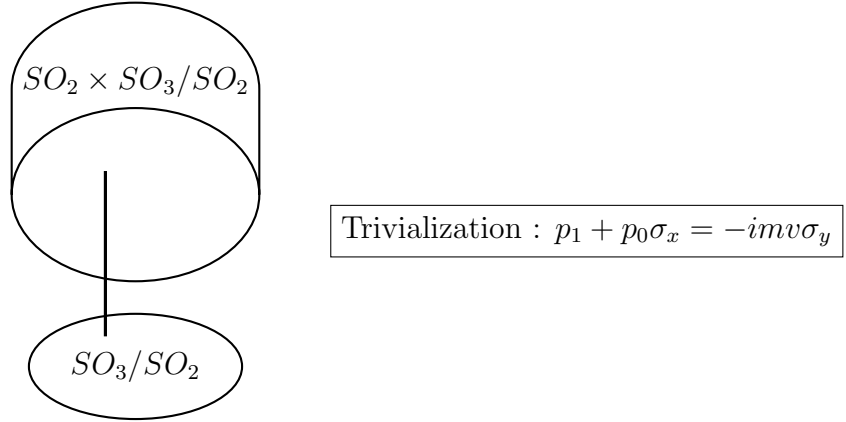
\begin{figure}
 \begin {center}
 \begin{tikzpicture}[scale=2]
  \node[cylinder,draw=black,thick,aspect=0.7,
        minimum height=1.7cm,minimum width=2.5cm,
        shape border rotate=-80,
        cylinder uses custom fill,
        ]  at (0,0)
   (A) {$SO_2\times{SO}_3/SO_2$};  
    \node[ellipse,draw=black,thick,aspect=0.7,
        minimum height=1.4cm,minimum width=2.5cm,
        shape border rotate=60,
        cylinder uses custom fill,
        ]  at (0,-1.9)
   (B) {$SO_3/SO_2$}; 
    \node [draw]  at (3,-1) {Trivialization : $p_1 + p_0\sigma_x = -imv\sigma_y$};
   \draw [draw=black,very thick] (-0.2,-0.6) -- (-0.2,-1.7);
\end{tikzpicture}
\end {center}
\caption {Fiber bundle formed in the neighborhood of (0,0,R) with the homogeneous space $SO_3/SO_2$ as the base and $SO_2$ as fibers.}
\end {figure}

The coin ensemble based on $SO_3$ constructed earlier can be described in the language of associated bundles as well. Starting with the principal G-bundle $\eta = (SO_3, \pi, SO_3/SO_2)$ and a vector space V on which $SO_2$ has a representation as the left G-space, fiber, let us construct the associated bundle $\eta[SO_2] = (SO_3\times_{V}V, \pi_{V}, SO_3/SO_2)$. The left action of $SO_2$ is defined by $g(p,v) = (ph, h^{-1}v), h\in{SO_2}$. This associated bundle has a cross-section, gauranteed by the theorem on page 149, \cite {Isham1989} S with respect to a mapping $\phi: P(\eta)\rightarrow{F}$ satisfying $\phi(pg) = g^{-1}\phi(p),\forall{p}\in{P(\eta)}, g\in{G} $  defined as
\begin {equation}
S_\phi(x) = [p, \phi(p)], \text{  where, p is any point on the fiber } \pi^{-1}(x).
\end {equation}
The cross-section can be used to build a Hilbert space containing rays of the wave functions $\Psi$ defined in equation \eqref {dsDirac}.

\pgfplotsset{
    colormap={whitered}{
        color(0cm)=(white);
        color(1cm)=(orange!75!red)
    }
}

\section {Summary and conclusions}
We derived the kinematics of split-step quantum walk using induced representations of groups and  expressed them in terms of systems of imprimitivity. We developed a geometric picture of the time-discrete version of the evolution as a fiber bundle in the momentum space with a 2-torus containing an hyperboloid as the base manifold. The continuous time counterpart is a bundle with the real plane as the base manifold and both the systems have the group $SL(2,\mathbb{C})$ (spinors) as the fibers. We also constructed systems of imprimitivity for walks based on $SO_3$ coins and derived Dirac equation on a de Sitter space. The automorphisms on distance-transitive graphs was shown as SI. Quantum walk framework being an important simulation tool it is imperative to cast the kinematics in terms of systems of imprimitivity to leverage the tools of induced representations for building more complex, for example gauges of cohomology, evolutions. We plan to build 2-cocycles from SI for the quantum walks and using that to construct a semigroup whose dilation results in a Fock space where the non-interactng walker evolves freely.


\end{document}